\documentclass[letterpaper, 10 pt, conference]{ieeeconf}
\usepackage[T1]{fontenc}
\IEEEoverridecommandlockouts                          

\usepackage{amsmath} 
\usepackage{amssymb} 
\usepackage{xcolor}
\usepackage{caption}
\usepackage{graphicx}
\usepackage{mathtools}
\usepackage{caption}
\usepackage{floatrow}
\usepackage{tikz}
\usepackage{tikz-cd}
\usepackage{url}
\usepackage{tabularx}

\usepackage{algorithmicx}
    \usepackage[ruled]{algorithm}
    \usepackage{algpseudocode}

\usepackage{pgfplots}
\pgfplotsset{compat=1.18}
\usepackage{xcolor}
\usetikzlibrary{matrix,arrows,calc,positioning,shapes,decorations.pathreplacing}
\usepackage{graphicx}

\usepackage{amsmath} 

\usepackage{enumerate}
\usepackage[all,tips]{xy}
\SelectTips{cm}{11}

\newtheorem{theorem}{Theorem}

\newtheorem{definition}{Definition}
\newtheorem{remark}{Remark}
\newtheorem{proposition}{Proposition}

\newtheorem{assumption}{Assumption}

\newcommand{\R}{\mathbb{R}}

\usepackage{amssymb}

\newcommand{\proa}{A^*G \mbox{$\;$}_{\tau^*} \kern-3pt\times_\alpha
G \mbox{$\;$}_\beta \kern-3pt\times_{\tau^*} A^*G}

\newcommand{\D}{\mathcal{D}}

\newcommand\tran{\mkern-2mu\raise1.25ex\hbox{$\scriptscriptstyle\top\hspace{0.5mm}$}\mkern-3.5mu}

\usepackage[noabbrev]{cleveref} 
\crefname{rem}{Remark}{Remarks}
\crefname{exam}{Example}{Examples}
\crefname{assum}{Assumption}{Assumptions}
\crefname{prop}{Proposition}{Propositions}
\crefname{propy}{Property}{Properties}
\crefname{cor}{Corollary}{Corollaries}
\crefname{lem}{Lemma}{Lemmas}
\crefname{section}{Section}{Sections}
\crefname{thm}{Theorem}{Theorems}
\crefname{alg}{Algorithm}{Algorithms}
\crefname{defn}{Definition}{Definitions}
\crefname{figure}{Fig.}{Fig.}
\Crefname{figure}{Figure}{Figures}
\crefname{equation}{}{}

\IEEEoverridecommandlockouts                              

\title{ \bf
Structure-Preserving Learning of Nonholonomic Dynamics
}

\author{Thomas Beckers, Anthony Bloch,  Leonardo Colombo
\thanks{Thomas Beckers is with the Department of Computer Science, Vanderbilt University, Nashville, TN 37212, USA {\tt\small thomas.beckers@vanderbilt.edu}}
\thanks{A. Bloch is with Department of Mathematics, University of Michigan, Ann Arbor, MI 48109, USA.{\tt\small abloch@umich.edu}}
\thanks{L. Colombo is with Centre for Automation and Robotics (CSIC-UPM), Ctra. M300 Campo Real, Km 0,200, Arganda
del Rey - 28500 Madrid, Spain.{\tt\small leonardo.colombo@csic.es}} 
\thanks{ L. Colombo acknowledge financial support from Grant PID2022-137909NB-C21 funded by MCIN/AEI/ 10.13039/501100011033. The research leading to these results was supported in part by iRoboCity2030-CM, Robótica Inteligente para Ciudades Sostenibles (TEC-2024/TEC-62), funded by the Programas de Actividades I+D en Tecnologías en la Comunidad de Madrid. A.Bloch was partially supported by NSF grant  DMS-2103026, and AFOSR grants FA
9550-22-1-0215 and FA 9550-23-1-0400.}
}

\begin{document}

\maketitle
\thispagestyle{empty}
\pagestyle{empty}

\begin{abstract}
Data-driven modeling is playing an increasing role in robotics and control, yet standard learning methods typically ignore the geometric structure of nonholonomic systems. As a consequence, the learned dynamics may violate the nonholonomic constraints and produce physically inconsistent motions. In this paper, we introduce a structure-preserving Gaussian process (GP) framework for learning nonholonomic dynamics. Our main ingredient is a nonholonomic matrix-valued kernel that incorporates the constraint distribution directly into the GP prior. This construction ensures that the learned vector field satisfies the nonholonomic constraints for all inputs. We show that the proposed kernel is positive semidefinite, characterize its associated reproducing kernel Hilbert space as a space of admissible vector fields, and prove that the resulting estimator admits a coordinate representation adapted to the constraint distribution. We also establish the consistency of the learned model. Numerical simulations on a vertical rolling disk illustrate the effectiveness of the proposed approach.
\end{abstract}

\section{Introduction}

Data-driven modeling has become an increasingly important tool in robotics, control, and dynamical systems. In many applications, the dynamics of a system are not perfectly known and must be learned from experimental data. Gaussian process (GP) regression has emerged as a powerful nonparametric approach for learning unknown dynamics due to its flexibility, probabilistic interpretation, and strong theoretical guarantees \cite{rasmussen2006gaussian}. However, standard machine learning methods typically ignore the geometric structure underlying many mechanical systems. 

Nonholonomic systems arise frequently in robotics and vehicle dynamics, including wheeled mobile robots and robotic locomotion systems \cite{bloch2003nonholonomic}, \cite{monforte2002geometric}, \cite{ne_mark2004dynamics}. These systems are characterized by velocity constraints that restrict the admissible directions of motion. As a consequence, their dynamics evolve on a distribution of allowable velocities rather than on the full tangent bundle of the configuration space. When learning dynamics directly from data, ignoring these constraints can lead to models that violate the admissible distribution and produce physically inconsistent predictions. This issue is well known in the modeling of robotic systems, where learned models may generate trajectories that are incompatible with the underlying mechanical constraints.

Several works have demonstrated that embedding geometric structure into learning architectures can markedly improve the consistency and interpretability of learned dynamical models. Representative examples include Hamiltonian and Lagrangian neural networks, equivariant models on Lie groups, and symmetry-preserving learning methods for Hamiltonian systems with conserved quantities \cite{Greydanus2019,Jin2020,Zhong2020,Cranmer2020,Finzi2020,Vaquero2024,Eldred2024,beckers2022gaussian}.

More recently, nonholonomic learning has started to receive attention in approaches based on constraint discovery \cite{Wang2024} while nonnholonomic neural networks have been exoplored in \cite{diaz2025lagrangian}. In this work, we incorporate the constraint distribution directly into the Gaussian process prior. In this way, the learned vector field is admissible by construction for every input. In particular, we propose a GP framework for learning nonholonomic dynamics while preserving the constraint distribution. The key idea is to construct a matrix-valued kernel that incorporates the constraint distribution directly into the GP prior. The resulting \emph{nonholonomic kernel} ensures that the learned vector field automatically satisfies the constraints for all inputs.

\textit{Main contributions}: \begin{itemize}
\item We introduce a nonholonomic kernel that embeds the constraint distribution into the GP prior and we prove that the resulting kernel is positive semidefinite and therefore defines a valid Gaussian process model.

\item We characterize the reproducing kernel Hilbert space induced by the kernel and show that it consists only of admissible vector fields.

\item We establish that learning with the nonholonomic kernel is equivalent to Gaussian process regression in coordinates adapted to the constraint distribution.

\item We prove consistency of the resulting estimator.

\end{itemize}

These results provide a structure-preserving learning framework for nonholonomic systems that combines geometric modeling with modern machine learning techniques. 

The paper is structured as follows. In Section~\ref{pre}, we review nonholonomic systems and recall the Gaussian process regression framework used throughout the paper. In Section~\ref{sec:nhkernel}, we introduce the nonholonomic kernel and establish its main structural properties and the analysis of the associated estimator, including its coordinate representation adapted to the constraint distribution and consistency properties. In Section~\ref{sec:simulations}, we present numerical simulations illustrating the proposed method for the vertical rolling disk. Section~\ref{sec:conclusions} contains concluding remarks and directions for future work.

\section{Background}\label{pre}

\subsection{Nonholonomic Mechanics}

Let $Q$ denote the configuration manifold of a mechanical system, with local coordinates $q \in Q$ with dim$(Q)=n$. We consider nonholonomic systems subject to linear velocity constraints of the form
$A(q)\dot q = 0$, where $A(q) \in \mathbb{R}^{k \times n}$ has full row rank and defines $k$ independent constraints. 

These constraints define the subspace
$\mathcal{D}_q = \ker A(q) \subset T_q Q$,
which specifies the set of allowable velocities at configuration $q$. A trajectory $q(t)$ is admissible if and only if $\dot q(t) \in \mathcal{D}_{q(t)}$.
Thus, the constraints determine a rank-$(n-k)$ distribution $\mathcal{D}\subset TQ$, whose fiber at each $q\in Q$ is $\D_q$.

For mechanical systems with Lagrangian $L(q,\dot q)$, the equations of motion subject to the nonholonomic constraints can be written using Lagrange multipliers as $\lambda\in\R^k$, see, for instance,  \cite{bloch2003nonholonomic,monforte2002geometric}
\begin{equation*}
\frac{d}{dt}\frac{\partial L}{\partial \dot q} - \frac{\partial L}{\partial q} = A(q)^\top \lambda, \qquad
A(q)\dot q = 0.
\end{equation*}

In this work we focus on learning admissible dynamics of the form $\dot q = f(q)$, where the vector field satisfies $f(q) \in \mathcal{D}_q$.

To describe the admissible subspace it is convenient to introduce the orthogonal projector onto the constraint distribution. 
For each $q\in Q$, let
\begin{equation}
P(q)=I-A(q)^\dagger A(q),
\label{eq:projector}
\end{equation}
where $A(q)^\dagger$ denotes the Moore--Penrose pseudoinverse. In local coordinates, and with respect to the Euclidean inner product on the ambient coordinate space, $P(q)$ is the orthogonal projector onto the admissible distribution
$\D_q=\ker A(q)$. In particular, $P(q)^2=P(q)$ and $\operatorname{Im}(P(q))=\D_q$.

This representation will be used in the next section to construct Gaussian process kernels that generate vector fields lying in the distribution $\mathcal{D}$.

\subsection{Gaussian Processes and Kernels}\label{sec:gp}
The Gaussian process (GP) provides a nonparametric framework for learning unknown functions from data. A GP is fully specified by a mean function $m(q)$ and a covariance function $k(q,q')$, and is written as $f \sim \mathcal{GP}(m(q),k(q,q'))$. In practice, the mean function is often taken to be zero, which simplifies the resulting expressions without reducing the expressive power of the model. Thus, in this work we consider the prior $f \sim \mathcal{GP}(0,k(q,q'))$, see \cite{rasmussen2006gaussian}.

Given training inputs $q_1,\dots,q_N \in Q$ and corresponding scalar observations $y_1,\dots,y_N \in \mathbb{R}$, the kernel matrix (or Gram matrix) is defined by
$K_{ij}=k(q_i,q_j)$, $i,j=1,\dots,N$. Assuming additive Gaussian measurement noise with variance $\sigma^2$, the posterior mean prediction at a test point $q\in Q$ is given by 
\begin{equation}
\hat f(q)=k(q,Q)\bigl(K(Q,Q)+\sigma^2 I\bigr)^{-1}Y,
\end{equation}
where $Y=[y_1,\dots,y_N]^\top$, $K(Q,Q)=[k(q_i,q_j)]_{i,j=1}^N$ is the Gram matrix associated with the training inputs, and $k(q,Q)=[k(q,q_1),\dots,k(q,q_N)]$.

A symmetric function $k\colon Q\times Q\to\mathbb{R}$ is called a positive definite kernel if, for any points $q_1,\dots,q_N\in Q$ and any coefficients $c_1,\dots,c_N\in\mathbb{R}$, $\displaystyle{\sum_{i,j=1}^N c_i\, k(q_i,q_j)\, c_j \ge 0}$. This condition guarantees that the associated Gram matrix is positive semidefinite and therefore defines a valid covariance function for GP regression. For kernels defined on non-Euclidean domains and Riemannian manifolds see \cite{Jayasumana2015,Borovitskiy2020}.

To learn vector fields on $Q$, it is natural to consider matrix-valued kernels $K\colon Q\times Q\to\mathbb{R}^{n\times n}$. Such a kernel is positive semidefinite if, for any $q_1,\dots,q_N\in Q$ and $c_1,\dots,c_N\in\mathbb{R}^n$,
$$\sum_{i,j=1}^N c_i^\top K(q_i,q_j)c_j \ge 0.$$ Matrix-valued kernels therefore provide a natural framework for GP models of vector-valued dynamics \cite{micchelli2005learning}, \cite{carmeli2010vector}.

Let $Q$ be a compact set. A continuous positive definite kernel $k:Q\times Q\to\mathbb{R}$ is said to be \textit{universal} if its associated RKHS $\mathcal H_k$ is dense in $C(Q)$, the space of real-valued continuous functions on $Q$, with respect to the uniform norm.  Typical examples on compact Euclidean domains include the squared exponential kernel and suitable Mat\'ern kernels; see, e.g., \cite{steinwart2008support}, \cite{sriperumbudur2011universality}.

\section{Learning Nonholonomic Dynamics}\label{sec:nhkernel}

\subsection{Nonholonomic Kernel} 

We now introduce a kernel that incorporates the geometric structure of nonholonomic systems directly into the GP prior.

\begin{definition}
Let $k(q,q')$ be a scalar positive definite kernel on $Q$. The \emph{nonholonomic kernel} associated with the constraint distribution $\D$ is defined as
\begin{equation}
K_{NH}(q,q') = P(q)\, k(q,q')\, P(q').\label{NHK}
\end{equation}
\end{definition}

This kernel produces vector-valued predictions in $TQ$ while enforcing admissibility with respect to the nonholonomic constraint distribution $\mathcal D$.

\begin{definition}A reproducing kernel Hilbert space (RKHS) $\mathcal{H}_K$ associated with a matrix-valued positive semidefinite kernel $K: Q\times Q \to \mathbb{R}^{n\times n}$ is a Hilbert space of vector-valued functions $f:Q\to\mathbb{R}^n$ such that, for every $q\in Q$ and $v\in\mathbb{R}^n$, the function $K(\cdot,q)v$ belongs to $\mathcal{H}_K$ and the reproducing property $\langle f, K(\cdot,q)v\rangle_{\mathcal{H}_K}= \langle f(q), v\rangle_{\mathbb{R}^n}$ holds for all $f\in\mathcal{H}_K$.\end{definition}

Associated with the matrix-valued kernel $K_{\mathrm{NH}}$ is a RKHS $\mathcal H_{\mathrm{NH}}$ of vector-valued functions on $Q$. The following result characterizes $\mathcal H_{\mathrm{NH}}$ and shows that it consists only of vector fields $f$ satisfying $f(q)\in \mathcal D_q$ for all $q\in Q$.

\begin{proposition}
Let $\mathcal H_{kI_n}$ be the RKHS associated with the matrix-valued kernel
$K_0(q,q')=k(q,q')I_n$, and define the operator $(Tg)(q):=P(q)g(q)$.
Then the RKHS $\mathcal H_{\mathrm{NH}}$ associated with the nonholonomic kernel \eqref{NHK} is given by $\mathcal H_{\mathrm{NH}}
=
\{\, Tg : g\in \mathcal H_{kI_n}\,\}$, endowed with the norm $\|f\|_{\mathcal H_{\mathrm{NH}}}
=
\inf\{\|g\|_{\mathcal H_{kI_n}}: Tg=f\}$.

In particular, every $f\in\mathcal H_{\mathrm{NH}}$ can be written as $f(q)=P(q)g(q)$ for some $g\in\mathcal H_{kI_n}$, and hence $f(q)\in \mathcal D_q
\,
\forall q\in Q$.
\end{proposition}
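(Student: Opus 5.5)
The plan is to use the standard pullback/image construction for RKHSs of transformed kernels: if $T$ is a bounded linear map from an RKHS $\mathcal H_{K_0}$ of $\mathbb{R}^n$-valued functions into functions on $Q$, then its range, with the quotient norm $\|f\|=\inf\{\|g\|:Tg=f\}$, is an RKHS. Its kernel is given by $K(q,q')v = T\bigl(K_0(\cdot,q')\,A_{q'}v\bigr)(q)$, where $A_{q'}$ is the adjoint of the pointwise evaluation composed with $T$. Concretely, I will first identify $\mathcal N:=\ker T$. Because $T$ acts pointwise through the evaluation maps, $\mathcal N=\{g\in\mathcal H_{kI_n}: P(q)g(q)=0\ \forall q\}$, and this is a closed subspace of $\mathcal H_{kI_n}$. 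Closedness holds because evaluations are continuous by the reproducing property of $K_0$. The map $T$ then restricts to a bijection $\mathcal N^\perp\to \operatorname{Im}T$. I will transport the inner product, $\langle Tg,Th\rangle:=\langle g,h\rangle_{\mathcal H_{kI_n}}$ for $g,h\in\mathcal N^\perp$. This makes $\operatorname{Im}T$ a Hilbert space, since it is isometric to the closed subspace $\mathcal N^\perp$. The norm coincides with the stated infimum because, for $Tg=f$, the minimal-norm preimage is the orthogonal projection of $g$ onto $\mathcal N^\perp$.

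Next I verify the two defining properties of the RKHS for $K_{\mathrm{NH}}$. For membership, fix $q'$ and $v\in\mathbb{R}^n$ and set $g_{q',v}:=k(\cdot,q')P(q')v=K_0(\cdot,q')P(q')v\in\mathcal H_{kI_n}$. Then $Tg_{q',v}=P(\cdot)k(\cdot,q')P(q')v=K_{\mathrm{NH}}(\cdot,q')v$, so this function lies in $\operatorname{Im}T$. I also check that $g_{q',v}\in\mathcal N^\perp$. For $h\in\mathcal N$, the reproducing property of $K_0$ gives $\langle h,g_{q',v}\rangle=\langle h(q'),P(q')v\rangle=\langle P(q')h(q'),v\rangle=0$, where I use that $P(q')$ is symmetric (an orthogonal projector, as recalled after \eqref{eq:projector}). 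For the reproducing property, take $f=Tg$ with $g\in\mathcal N^\perp$. Then $\langle f,K_{\mathrm{NH}}(\cdot,q')v\rangle=\langle g,g_{q',v}\rangle_{\mathcal H_{kI_n}}=\langle g(q'),P(q')v\rangle=\langle P(q')g(q'),v\rangle=\langle f(q'),v\rangle$. By the uniqueness of the RKHS associated with a positive semidefinite matrix kernel (Moore--Aronszajn in the vector-valued form of \cite{micchelli2005learning,carmeli2010vector}), $\operatorname{Im}T$ with this norm is $\mathcal H_{\mathrm{NH}}$. As a byproduct, this also shows that $K_{\mathrm{NH}}$ is positive semidefinite. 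Alternatively, this follows directly from $\sum_{i,j} c_i^\top P(q_i)k(q_i,q_j)P(q_j)c_j=\sum_{i,j}\tilde c_i^\top k(q_i,q_j)\tilde c_j\ge0$ with $\tilde c_i=P(q_i)c_i$.

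The final claim is immediate. Every $f\in\mathcal H_{\mathrm{NH}}$ has the form $f(q)=P(q)g(q)$, and $\operatorname{Im}P(q)=\mathcal D_q$, so $f(q)\in\mathcal D_q$ for all $q$.

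The main obstacle is the step showing that $\mathcal N$ is closed and that the transported norm equals the infimum. This is where the pointwise definition of $T$ must be tied to continuity of evaluation. I will handle it entirely through the reproducing property of $K_0$, writing $P(q)g(q)=0$ as $\langle g,K_0(\cdot,q)w\rangle=0$ for all $w\in\operatorname{Im}P(q)$. This identity also shows directly that $\mathcal N^\perp$ is the closed span of the functions $g_{q',v}$, which avoids any regularity assumptions on $P$. That is convenient, since $P$ is only assumed to come from a full-rank $A(q)$.
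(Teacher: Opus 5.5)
Your proposal is correct and takes the same route as the paper: you realize $\mathcal H_{\mathrm{NH}}$ as the image of $\mathcal H_{kI_n}$ under $T$, with the key identity $T\bigl(K_0(\cdot,q')P(q')v\bigr)=K_{\mathrm{NH}}(\cdot,q')v$. The paper only checks that the kernel sections and their span lie in $T(\mathcal H_{kI_n})$ and then cites the infimum-norm description as standard, whereas you actually prove it (closedness of $\ker T$ from continuity of evaluation, agreement of the transported inner product on $(\ker T)^\perp$ with the infimum norm, the reproducing property via symmetry of $P(q')$, and uniqueness of the RKHS), so your write-up supplies the step the paper leaves implicit.
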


\begin{proof}
Define the linear operator $T:\mathcal H_{kI_n}\to \{f:Q\to\mathbb R^n\}$,
$(Tg)(q):=P(q)g(q)$. We claim that the matrix-valued kernel $K_{\mathrm{NH}}(q,q')$ is precisely the kernel induced by the operator $T$.

Indeed, for any $q'\in Q$ and $c\in\mathbb R^n$, the kernel section of $K_0$ is
$K_0(\cdot,q')c = k(\cdot,q')c \in \mathcal H_{kI_n}$, and applying $T$ gives
\[
T\bigl(K_0(\cdot,q')P(q')c\bigr)(q)
=
P(q)\,k(q,q')\,P(q')c
=
K_{\mathrm{NH}}(q,q')c.
\]
Hence every kernel section of $K_{\mathrm{NH}}$ belongs to $T(\mathcal H_{kI_n})$.

Conversely, the linear span of such sections is contained in $T(\mathcal H_{kI_n})$, since for any finite linear combination,
\[
\sum_{i=1}^N K_{\mathrm{NH}}(\cdot,q_i)c_i
=
T\left(
\sum_{i=1}^N K_0(\cdot,q_i)P(q_i)c_i
\right).
\]
Therefore, the RKHS generated by $K_{\mathrm{NH}}$ is exactly the image of $\mathcal H_{kI_n}$ under $T$, endowed with the standard norm
\[
\|f\|_{\mathcal H_{\mathrm{NH}}}
=
\inf\{\|g\|_{\mathcal H_{kI_n}}: Tg=f\}.
\]

It follows that every $f\in\mathcal H_{\mathrm{NH}}$ admits a representation
$f=Tg$ with $g\in\mathcal H_{kI_n}$, that is, $f(q)=P(q)g(q)$. Since $\operatorname{Im}(P(q))=\mathcal D_q$, then $f(q)\in \mathcal D_q$, $\forall q\in Q$.
\end{proof}
\begin{remark}
The previous proposition shows that the nonholonomic kernel does not merely project the posterior prediction a posteriori. Rather, it restricts the entire GP prior to vector fields taking values in $\D$.
\end{remark}

\begin{proposition}
If $k(q,q')$ is positive definite, then the nonholonomic kernel $K_{NH}(q,q')$ is a positive semidefinite matrix-valued kernel.
\end{proposition}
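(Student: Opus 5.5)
The plan is to verify the defining inequality for matrix-valued kernels directly, by reducing it to the scalar positive definiteness of $k$ through the substitution $d_i := P(q_i)c_i$.

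Fix points $q_1,\dots,q_N\in Q$ and vectors $c_1,\dots,c_N\in\mathbb{R}^n$. The first step uses the symmetry of the orthogonal projector, $P(q)^\top = P(q)$, which holds because $I - A^\dagger A$ is the orthogonal projector onto $\ker A(q)$ with respect to the Euclidean inner product, as recalled in \eqref{eq:projector}. Since $k(q_i,q_j)$ is a scalar, the quadratic form becomes
\[
\sum_{i,j=1}^N c_i^\top P(q_i)\,k(q_i,q_j)\,P(q_j)c_j
= \sum_{i,j=1}^N k(q_i,q_j)\, d_i^\top d_j,
\]
where $d_i := P(q_i)c_i \in \mathbb{R}^n$.

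Next, write $d_i = (d_i^1,\dots,d_i^n)$ in components. Then
\[
\sum_{i,j} k(q_i,q_j)\, d_i^\top d_j = \sum_{\ell=1}^n \sum_{i,j=1}^N d_i^\ell\, k(q_i,q_j)\, d_j^\ell .
\]
For each fixed $\ell$, the inner double sum is nonnegative by the scalar positive definiteness of $k$ applied with real coefficients $d_i^\ell$. A sum of $n$ nonnegative terms is nonnegative, so the quadratic form is nonnegative.

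Equivalently, the Gram matrix of $kI_n$ is $K\otimes I_n$, which is positive semidefinite as a Kronecker product of PSD matrices. The full Gram matrix of $K_{NH}$ is then $\mathcal{P}^\top (K\otimes I_n)\mathcal{P}$ with $\mathcal{P}=\mathrm{blockdiag}(P(q_i))$, a congruence of a PSD matrix and hence PSD.

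For completeness I also check the symmetry condition $K_{NH}(q,q')^\top = P(q')^\top k(q',q) P(q)^\top = K_{NH}(q',q)$, which uses the symmetry of $k$ and of $P$. This confirms that $K_{NH}$ is a valid symmetric kernel.

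There is no real obstacle. The only point needing care is to justify $P(q)^\top = P(q)$ (using that $A^\dagger A$ is symmetric by the Penrose conditions), so that the left factor $P(q_i)$ acts on $c_i$ exactly as the right factor does. Alternatively, one can observe that the previous proposition already exhibits $K_{NH}$ as the kernel induced by the operator $T$ from $\mathcal H_{kI_n}$, which gives positive semidefiniteness automatically. The direct computation above is self-contained, however.
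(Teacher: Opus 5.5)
Your proof is correct and takes the same route as the paper's: substitute $\tilde c_i = P(q_i)c_i$ and reduce to the positive definiteness of the scalar kernel $k$. Your version is more careful, because it makes two steps explicit that the paper's one-line argument leaves implicit: the symmetry $P(q)^\top = P(q)$, needed to rewrite $c_i^\top P(q_i)$ as $\tilde c_i^\top$, and the coordinatewise application of scalar positive definiteness to vector-valued coefficients.
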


\begin{proof}
Let $c_1,\ldots,c_N\in\mathbb R^n$ be arbitrary vectors. Then
\begin{equation*}
\sum_{i,j=1}^N c_i^T K_{NH}(q_i,q_j)c_j
=
\sum_{i,j=1}^N c_i^T P(q_i)\, k(q_i,q_j)\, P(q_j)c_j .
\end{equation*}
Defining $\tilde c_i = P(q_i)c_i$ gives
$\displaystyle{\sum_{i,j=1}^N \tilde c_i^T k(q_i,q_j)\tilde c_j \ge 0}$, since $k$ is positive definite.
\end{proof}

Propositions 1 and 2 together show that the proposed nonholonomic kernel defines a valid GP model whose associated hypothesis space consists only of admissible vector fields. We now make this implication explicit at the level of the GP predictor, showing that the posterior mean automatically satisfies the nonholonomic constraints for every input.

\begin{proposition}
Let $\hat f(q)$ be the Gaussian process predictor obtained using the nonholonomic kernel $K_{NH}$. Then $A(q)\hat f(q)=0$ for all $q\in Q$.
\end{proposition}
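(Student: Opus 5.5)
The plan is to write the posterior mean explicitly in its vector-valued form and read off that it lies in $\operatorname{Im}P(q)=\mathcal D_q$. With training inputs $q_1,\dots,q_N$ and vector observations $y_1,\dots,y_N\in\mathbb R^n$, stacked into $Y\in\mathbb R^{nN}$, the matrix-valued analogue of the predictor in Section~\ref{sec:gp} is
\[
\hat f(q)=K_{NH}(q,Q)\bigl(K_{NH}(Q,Q)+\sigma^2 I_{nN}\bigr)^{-1}Y .
\]
Here $K_{NH}(q,Q)=[K_{NH}(q,q_1),\dots,K_{NH}(q,q_N)]\in\mathbb R^{n\times nN}$, and $K_{NH}(Q,Q)$ is the block Gram matrix. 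By Proposition 2 the Gram matrix is positive semidefinite, so for $\sigma^2>0$ the matrix $K_{NH}(Q,Q)+\sigma^2 I$ is positive definite and hence invertible. The predictor is therefore well defined. In the noiseless case one would instead use the pseudoinverse, and the argument below is unchanged.

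Next set $\alpha:=(K_{NH}(Q,Q)+\sigma^2 I)^{-1}Y=(\alpha_1,\dots,\alpha_N)$ with $\alpha_i\in\mathbb R^n$. Then
\[
\hat f(q)=\sum_{i=1}^N K_{NH}(q,q_i)\alpha_i=P(q)\sum_{i=1}^N k(q,q_i)P(q_i)\alpha_i .
\]
This is exactly $\hat f=Tg$ with $g=\sum_i K_0(\cdot,q_i)P(q_i)\alpha_i\in\mathcal H_{kI_n}$, matching the computation in the proof of Proposition 1. In particular $\hat f\in\mathcal H_{NH}$, so that proposition already gives $\hat f(q)\in\mathcal D_q$.

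To conclude directly, I will use $A(q)P(q)=A(q)-A(q)A(q)^\dagger A(q)=0$, which is the Moore--Penrose identity $AA^\dagger A=A$. Multiplying the display above on the left by $A(q)$ then gives $A(q)\hat f(q)=0$ for every $q\in Q$, independently of the data $Y$ and of $\sigma$.

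I expect no real obstacle. The one point to handle carefully is fixing the vector-valued form of the predictor, since the paper states only the scalar version, and checking that the matrix being inverted is invertible. For $\sigma^2>0$ this follows from Proposition 2. Without noise, the pseudoinverse still leaves the factor $P(q)$ on the left, so the same argument applies.
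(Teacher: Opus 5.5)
Your proof is correct and follows the same route as the paper: express the posterior mean as $\sum_{i} K_{NH}(q,q_i)\alpha_i = P(q)\sum_{i} k(q,q_i)P(q_i)\alpha_i$ and then use $A(q)P(q)=0$. Your additional steps fill in details the paper leaves implicit, namely writing out the vector-valued predictor, justifying invertibility of $K_{NH}(Q,Q)+\sigma^2 I$ via Proposition 2, and deriving $A(q)P(q)=0$ from the Moore--Penrose identity $AA^\dagger A = A$; all of them are sound.
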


\begin{proof}
From the definition of the nonholonomic kernel, any mean function generated by the Gaussian process has the form $\displaystyle{\hat f(q)=\sum_{i=1}^N K_{NH}(q,q_i)c_i}$. Substituting the kernel definition gives
$\displaystyle{\hat f(q)=P(q)\sum_{i=1}^N k(q,q_i)P(q_i)c_i}$.

Since $P(q)$ is the orthogonal projector onto $\mathcal D_q$, we have $A(q)P(q)=0$. Therefore $A(q)\hat f(q)=0$. \end{proof}

\subsection{Adapted Coordinates Interpretation} 

Nonholonomic systems admit a natural representation in terms of coordinates adapted to the constraint distribution. Let $B(q)\in\mathbb R^{n\times (n-k)}$ be a smooth matrix whose columns form a basis of the $\mathcal D_q$, i.e., $\mathcal D_q = \mathrm{Im}(B(q))$.

Any admissible velocity can then be written as $\dot q = B(q)\nu$, where $\nu\in\mathbb R^{n-k}$ represents the generalized velocity in adapted coordinates to $\D$. Consequently, any admissible vector field $f(q)$ satisfying $f(q)\in\mathcal D_q$ admits the representation
\begin{equation}
f(q)=B(q)\nu(q),
\end{equation}
for some vector field $\nu(q)$ adapted to $\D$.

Next, we show that the nonholonomic kernel implicitly performs GP regression on these adapted coordinates.

\begin{theorem}\label{th1}
Let $B(q)\in\mathbb{R}^{n\times(n-k)}$ be a smooth matrix whose columns form a basis of the constraint distribution $\D_q$, so that $\mathcal{D}_q=\operatorname{Im}(B(q))$. Then every function $f\in H_{\mathrm{NH}}$ admits a representation
$f(q)=B(q)\nu(q)$, for some vector field $\nu(q)$ on $\D$. Accordingly, learning with the nonholonomic kernel admits an equivalent representation in adapted coordinates to the constraint distribution.
\end{theorem}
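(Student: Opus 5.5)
We build on Proposition 1. Any $f\in\mathcal H_{\mathrm{NH}}$ has the form $f(q)=P(q)g(q)$ with $g\in\mathcal H_{kI_n}$, and hence $f(q)\in\D_q=\operatorname{Im}(B(q))$ for every $q$. Because the columns of $B(q)$ are linearly independent, $B(q)^\top B(q)$ is invertible. The left inverse $B(q)^+:=(B(q)^\top B(q))^{-1}B(q)^\top$ is therefore well defined and depends smoothly on $q$. We set
\[
\nu(q):=B(q)^+ f(q)\in\mathbb R^{n-k}.
\]
Since $f(q)$ lies in $\operatorname{Im}(B(q))$, we can write $f(q)=B(q)w$ for some $w$. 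Then $B(q)B(q)^+f(q)=B(q)w=f(q)$, which gives $f(q)=B(q)\nu(q)$. The vector $\nu(q)$ is unique, because $B(q)$ is injective. This establishes the representation claimed in the first part of Theorem~\ref{th1}.

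For the ``equivalent representation'' part, we first identify the projector. The matrix $B(q)B(q)^+$ is the Euclidean orthogonal projector onto $\operatorname{Im}(B(q))=\D_q$. By uniqueness of orthogonal projectors, it therefore coincides with $P(q)$ from \eqref{eq:projector}. Substituting into \eqref{NHK} gives
\[
K_{NH}(q,q')=B(q)\,K_\nu(q,q')\,B(q')^\top,\qquad K_\nu(q,q'):=B(q)^+\,k(q,q')\,(B(q')^+)^\top\in\mathbb R^{(n-k)\times(n-k)}.
\]
The kernel $K_\nu$ is positive semidefinite by the same argument as in Proposition 2, now with $\tilde c_i=(B(q_i)^+)^\top c_i$.

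We then translate the GP predictor. With the Gram matrix built blockwise from $K_{NH}$, the predictor reads $\hat f(q)=\sum_i K_{NH}(q,q_i)\alpha_i$. This equals $B(q)\sum_i K_\nu(q,q_i)\,B(q_i)^\top\alpha_i=B(q)\hat\nu(q)$, so $\hat\nu$ is a kernel expansion in the adapted-coordinate kernel $K_\nu$ with coefficients $\beta_i=B(q_i)^\top\alpha_i$. To make the equivalence precise, we would show that the map $\nu\mapsto B\nu$ is an isometry from $\mathcal H_{K_\nu}$ onto $\mathcal H_{\mathrm{NH}}$. We would argue this exactly as in Proposition 1, via the pullback/pushforward of kernels under the pointwise linear map $B(q)$. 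Injectivity of $B(q)$ makes the infimum in the norm trivial, so the map is isometric.

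The main obstacle is stating the regression equivalence cleanly when the observations are ambient vectors $y_i\in\mathbb R^n$ corrupted by isotropic noise. Only the admissible component $P(q_i)y_i$ is informative, while the orthogonal component only contributes a noise term. We would handle this by noting that $K_{NH}(q,q_i)=K_{NH}(q,q_i)P(q_i)$. Consequently the posterior mean is unchanged, up to the transformed noise covariance $\sigma^2B(q_i)^+(B(q_i)^+)^\top$, when $y_i$ is replaced by its adapted coordinates $B(q_i)^+y_i$. The result is GP regression for $\nu$ with kernel $K_\nu$ and a (possibly heteroscedastic) noise model. If $B$ is chosen orthonormal, then $B^+=B^\top$, $K_\nu=k\,I_{n-k}$, and the noise stays isotropic, which yields the cleanest form of the equivalence.
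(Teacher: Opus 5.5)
Your proof of the representation is essentially the paper's: both rest on the pseudoinverse of $B(q)$ and the identity $P(q)=B(q)B(q)^\dagger$. The paper sets $\nu=B^\dagger g$ and you set $\nu=B^+f$. These coincide because $B^+P=B^+BB^+=B^+$, and your choice makes it visible that $\nu$ depends only on $f$ (hence is unique), not on the non-unique $g$.

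Three of your additions go beyond the paper, which leaves exactly this issue as a caveat in the remark following the theorem: the factorization $K_{NH}(q,q')=B(q)K_\nu(q,q')B(q')^\top$, the isometry $\nu\mapsto B\nu$ from $\mathcal H_{K_\nu}$ onto $\mathcal H_{\mathrm{NH}}$, and the posterior-mean equivalence with noise covariance $\sigma^2\bigl(B(q_i)^\top B(q_i)\bigr)^{-1}$. All three are correct. For the last one, however, the identity $K_{NH}(q,q_i)=K_{NH}(q,q_i)P(q_i)$ is not enough on its own. Write $\mathbf P=\operatorname{blkdiag}(P(q_i))$ and $\mathbf K=[k(q_i,q_j)I_n]_{i,j}$. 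You also need that $\mathbf P$ commutes with $(\mathbf P\mathbf K\mathbf P+\sigma^2I)^{-1}$, together with the compression identity
\[
\mathbf P(\mathbf P\mathbf K\mathbf P+\sigma^2 I)^{-1}\mathbf P
=C\bigl(C^\top(\mathbf K+\sigma^2 I)C\bigr)^{-1}C^\top,
\qquad C=\operatorname{blkdiag}\bigl((B(q_i)^+)^\top\bigr).
\]
This identity is valid since $C$ has full column rank and $\operatorname{Im}C=\operatorname{Im}\mathbf P$. Together these two facts give exactly $\hat f=B\hat\nu$.

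Your closing sentence is wrong, though. For orthonormal $B$ you get $K_\nu(q,q')=k(q,q')\,B(q)^\top B(q')$. For orthonormal columns, $B(q)^\top B(q')=I$ forces $B(q)=B(q')$. So $K_\nu=k\,I_{n-k}$ requires a constant frame (wherever $k\neq 0$), i.e.\ a constant, hence integrable, distribution. That is never the nonholonomic case. For the rolling disk with the orthonormal frame $\tfrac{1}{\sqrt2}(\cos\varphi,\sin\varphi,0,1)^\top$, $(0,0,1,0)^\top$, one finds $B(q)^\top B(q')=\operatorname{diag}\bigl(\tfrac{1+\cos(\varphi-\varphi')}{2},1\bigr)$.

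Only the isotropy of the transformed noise survives. Independent scalar GPs with kernel $k$ on the adapted channels realize the admissible prior $B(q)k(q,q')B(q')^\top$, which differs from $P(q)k(q,q')P(q')$. This does not affect the theorem, but that sentence should be corrected.
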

\begin{proof}
Any function $f\in H_{\mathrm{NH}}$ can be written as $f(q)=P(q)g(q)$, for some vector-valued function $g$. Since $P(q)$ projects onto $\D_q$ and $\D_q=\operatorname{Im}(B(q))$, there exists a pseudoinverse $B(q)^\dagger$ such that $P(q)=B(q)B(q)^\dagger$. Therefore $f(q)=B(q)B(q)^\dagger g(q)$.

Defining $\nu(q)=B(q)^\dagger g(q)$, we obtain $f(q)=B(q)\nu(q)$. Thus every admissible function generated by the nonholonomic kernel can be represented in adapted coordinates.
\end{proof}

\begin{remark}
The previous theorem should be understood as a representation result at the level of admissible function classes. An exact statistical equivalence between regression in ambient coordinates and regression in adapted coordinates additionally depends on the observation model and on how measurement noise transforms under the map $B(q)^\dagger$.\end{remark}

\begin{remark}Theorem \ref{th1} provides an alternative interpretation of the proposed method. Instead of explicitly parameterizing the admissible dynamics using adapted coordinates $B(q)$, the nonholonomic kernel enforces the same structure implicitly through the projection operator $P(q)$. As a consequence, the Gaussian process prior is automatically restricted to vector fields compatible with the constraint distribution.\end{remark}

\subsection{Consistency of the Estimator}

We now study the statistical properties of the estimator induced by the nonholonomic kernel. In particular, we show that the resulting GP scheme is consistent for the true admissible vector field, provided that the latter belongs to, or is well approximated by, the function class induced by the underlying scalar kernel.

Assume that we observe noisy samples of the vector field $y_i = f^\star(q_i) + \varepsilon_i$,
where $q_i \in Q$, the noise terms $\varepsilon_i$ are independent Gaussian variables with variance $\sigma^2$, and the true dynamics $f^\star$ satisfy the nonholonomic constraints $f^\star(q) \in \mathcal D_q$. Let $\hat f_N$ denote the GP estimator obtained from $N$ samples using the nonholonomic kernel $K_{NH}$.

\begin{proposition}
Let $\hat g_N$ denote an estimator obtained with the kernel $k(q,q')I_n$, and let $\hat f_N$ be the estimator induced by the nonholonomic kernel, so that $\hat f_N(q)=P(q)\hat g_N(q)$.

Assume that the true admissible dynamics satisfy $f^\star(q)=P(q)g^\star(q)$ for some vector-valued function $g^\star$. Then
\begin{equation}
\|\hat f_N(q)-f^\star(q)\|
\le
\|P(q)\|\,\|\hat g_N(q)-g^\star(q)\|.
\label{eq:projection_bound}
\end{equation}
\end{proposition}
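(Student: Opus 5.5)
The bound \eqref{eq:projection_bound} follows directly from linearity of the projector and submultiplicativity of the operator norm. The plan is to fix an arbitrary $q\in Q$ and work pointwise. Two representations from the hypotheses are used: $\hat f_N(q)=P(q)\hat g_N(q)$ (given in the statement) and $f^\star(q)=P(q)g^\star(q)$ (the assumption on the true dynamics).

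First, subtract the two representations and use linearity of matrix multiplication to write
\[
\hat f_N(q)-f^\star(q)=P(q)\hat g_N(q)-P(q)g^\star(q)=P(q)\bigl(\hat g_N(q)-g^\star(q)\bigr).
\]
Second, take the Euclidean norm of both sides. Then apply the defining property of the induced operator norm, $\|Mx\|\le\|M\|\,\|x\|$, with $M=P(q)$ and $x=\hat g_N(q)-g^\star(q)$. This yields \eqref{eq:projection_bound} immediately.

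As a complementary remark, since $P(q)$ is an orthogonal projector by \eqref{eq:projector}, it is symmetric and idempotent. Hence its eigenvalues lie in $\{0,1\}$, so $\|P(q)\|\le 1$, with equality whenever $\mathcal D_q\neq\{0\}$. This shows that the error of the nonholonomic estimator never exceeds that of the unconstrained one.

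There is no genuine obstacle here. The only point needing care is justifying the identity $\hat f_N=P\hat g_N$. In the statement it is stated as an assumption ("so that"), so it can be invoked directly. If one wanted to ground it further, one would appeal to Proposition 1, which describes $\mathcal H_{\mathrm{NH}}$ as the image $T(\mathcal H_{kI_n})$, together with the representer form used in Proposition 3. The bound itself is purely pointwise and holds for every $q$ and every $N$.
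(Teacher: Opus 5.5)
Your proof is correct and is the same as the paper's: factor $P(q)$ out of $\hat f_N(q)-f^\star(q)$ by linearity and apply $\|Mx\|\le\|M\|\,\|x\|$, with your remark $\|P(q)\|\le 1$ matching the paper's subsequent remark. One caution on your aside: Propositions 1 and 3 only give $\hat f_N=P\tilde g$ for some $\tilde g\in\mathcal H_{kI_n}$, namely $\tilde g=\sum_i k(\cdot,q_i)P(q_i)c_i$. That $\tilde g$ is in general not the $kI_n$-trained estimator $\hat g_N$, because the two Gram matrices differ when $P(q_i)$ varies with $i$; so $\hat f_N=P\hat g_N$ should remain the hypothesis the statement makes it, not something presented as derivable.
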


\begin{proof}
From the construction of the estimator we have $\hat f_N(q)=P(q)\hat g_N(q)$, and $f^\star(q)=P(q)g^\star(q)$. Therefore
$\hat f_N(q)-f^\star(q)
=
P(q)(\hat g_N(q)-g^\star(q))$. Taking norms, $\|\hat f_N(q)-f^\star(q)\|
\le
\|P(q)\|\,\|\hat g_N(q)-g^\star(q)\|$. \end{proof}

\begin{remark}Proposition 4 shows that the projection step is stable: the error of the nonholonomic estimator is controlled by the error of the underlying unconstrained estimator through the operator norm of $P(q)$. Since $P(q)$ is an orthogonal projector, one has $\|P(q)\|\le 1$ in the induced Euclidean operator norm. Hence $\|\hat f_N(q)-f^\star(q)\|
\le
\|\hat g_N(q)-g^\star(q)\|$.\end{remark}

Next, we address the universal propery of the proposed nonholonomic kernel. Let $Q$ be a compact set. A continuous positive definite kernel $k:Q\times Q\to\mathbb{R}$ is said to be \textit{universal} if its associated RKHS $\mathcal H_k$ is dense in $C(Q)$, the space of real-valued continuous functions on $Q$, with respect to the uniform norm. Typical examples on compact Euclidean domains include the squared exponential kernel and suitable Mat\'ern kernels; see, e.g., \cite{steinwart2008support}, \cite{sriperumbudur2011universality}.

\begin{assumption}
The scalar kernel $k(q,q')$ is universal on the compact set $Q$.
\end{assumption}

\begin{assumption}
The true vector field $f^\star$ belongs to the RKHS $\mathcal H_{NH}$ associated with the nonholonomic kernel.
\end{assumption}

Assumption 2 is a standard realizability assumption in RKHS-based consistency analysis. It requires that the true admissible dynamics lie in the space induced by the nonholonomic kernel, i.e., they can be represented as the projection of a sufficiently regular ambient vector field. While this is not automatic, it is natural for smooth nonholonomic dynamics on compact domains and for sufficiently rich scalar kernels.

\begin{theorem}
Let $\hat f_N$ be the Gaussian process estimator obtained using the nonholonomic kernel $K_{\mathrm{NH}}$ and $N$ distinct training samples. Under the above assumptions, $\hat f_N$ is uniformly consistent in probability, namely, for every $\varepsilon>0$,
\[
\mathbb{P}\!\left(
\sup_{q\in Q}\|\hat f_N(q)-f^\star(q)\|>\varepsilon
\right)\to 0\,\, \text{as } N\to\infty.
\]
\end{theorem}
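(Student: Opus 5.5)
The plan is to reduce the statement to the consistency of the unconstrained vector-valued estimator built from $k(q,q')I_n$. Proposition 4 and the subsequent remark then transfer that consistency to $\hat f_N$. The reduction has three steps.

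\textbf{Lift the realizability assumption.} By Assumption 2 and Proposition 1, $f^\star\in\mathcal H_{\mathrm{NH}}$ can be written as $f^\star=Pg^\star$ with $g^\star\in\mathcal H_{kI_n}$. I will choose $g^\star$ to be the minimal-norm preimage; it exists because the preimage set is a closed affine subspace, since $T$ is bounded under pointwise evaluation. I will also use that the observations satisfy $Py_i=f^\star(q_i)+P(q_i)\varepsilon_i$.

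\textbf{Establish the factorization $\hat f_N=P\hat g_N$.} The natural candidate for $\hat g_N$ is the kernel ridge / posterior-mean estimator with kernel $K_0$ evaluated on the projected data. The point to verify is the identity $\hat f_N(q)=P(q)\hat g_N(q)$. I would obtain it from the representer theorem. The regularized least-squares problem in $\mathcal H_{\mathrm{NH}}$ is the image under $T$ of the problem $\min_g \sum_i\|P(q_i)(g(q_i)-y_i)\|^2+\sigma^2\|g\|^2_{\mathcal H_{kI_n}}$. This holds because the norm on $\mathcal H_{\mathrm{NH}}$ is the quotient norm. The minimizer $\hat g_N$ of this lifted problem then satisfies $T\hat g_N=\hat f_N$.

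\textbf{Apply componentwise consistency.} The lifted problem is a (projected-loss) kernel ridge regression with a universal scalar kernel on a compact set (Assumption 1), and the target $g^\star$ lies in the RKHS. I would invoke standard RKHS consistency results for regularized least squares (e.g.\ \cite{steinwart2008support}) with regularization $\sigma^2/N\to0$ suitably. These give $\|\hat g_N-g^\star\|_{\mathcal H_{kI_n}}\to0$ in probability, at least on $\mathcal D$-components. Uniform convergence then follows from the bound $\sup_q\|h(q)\|\le\sup_q\sqrt{k(q,q)}\,\|h\|_{\mathcal H_{kI_n}}$, with $k$ continuous on compact $Q$. Finally, \eqref{eq:projection_bound} with $\|P(q)\|\le1$ gives
\[
\sup_{q\in Q}\|\hat f_N(q)-f^\star(q)\|\le \sup_{q\in Q}\|P(q)(\hat g_N(q)-g^\star(q))\|\to0
\]
in probability, which is the claim. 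Working directly in $\mathcal H_{\mathrm{NH}}$, whose kernel is bounded by $\sup k$, is an equivalent route and may be cleaner.

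\textbf{Main obstacle.} The hardest step is the consistency in the third paragraph. The fixed noise level $\sigma^2$ in the GP posterior corresponds to ridge parameter $\sigma^2/N$, and RKHS-norm consistency needs assumptions on the sampling design: i.i.d.\ inputs from a distribution with full support on $Q$, or fill distance tending to zero. I would add such a design assumption explicitly. I would then first try the Smale--Zhou style error decomposition in $\mathcal H_{\mathrm{NH}}$. That decomposition splits the error into an approximation error, which is zero by realizability, and a sample error, which is controlled by a concentration inequality for Hilbert-space-valued random variables.
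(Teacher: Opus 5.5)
Your skeleton matches the paper's proof: lift $f^\star=Pg^\star$ via Proposition 1, write $\hat f_N=P\hat g_N$, invoke consistency of $\hat g_N$, and transfer it through $\|P(q)\|\le 1$.

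Your second step is more careful than the paper. The paper takes $\hat g_N$ to be the $kI_n$ posterior mean fitted to the raw $y_i$, and that function does not satisfy $P\hat g_N=\hat f_N$ in general. The linear systems $K_{\mathrm{NH}}(Q,Q)+\sigma^2 I$ and $k(Q,Q)\otimes I_n+\sigma^2 I$ differ, and the components $(I-P(q_i))y_i$ enter the latter and are not annihilated by $P(q)$ for $q\neq q_i$. Your projected-loss minimizer does satisfy the identity, because $\|P(q_i)g(q_i)-y_i\|^2=\|P(q_i)(g(q_i)-y_i)\|^2+\|(I-P(q_i))y_i\|^2$ and $\|\cdot\|_{\mathcal H_{\mathrm{NH}}}$ is the quotient norm. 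Since $\hat g_N$ and your minimal-norm $g^\star$ both lie in $(\ker T)^\perp$, you also get $\|\hat g_N-g^\star\|_{\mathcal H_{kI_n}}=\|\hat f_N-f^\star\|_{\mathcal H_{\mathrm{NH}}}$. So the hedge ``at least on $\mathcal D$-components'' is unnecessary.

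The gap is in your third step. Convergence in RKHS norm cannot be reached for the GP posterior mean with a fixed noise level, so it cannot serve as the intermediate target.

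\begin{itemize}
\item The noise part of $\hat f_N$ is $\sum_i K_{\mathrm{NH}}(\cdot,q_i)c_i$ with $c=(G+\sigma^2 I)^{-1}\varepsilon$ and $G=K_{\mathrm{NH}}(Q,Q)$.
\item Its expected squared $\mathcal H_{\mathrm{NH}}$-norm is $\sigma^2\operatorname{tr}\bigl(G(G+\sigma^2 I)^{-2}\bigr)=\sigma^2\sum_j \kappa_j/(\kappa_j+\sigma^2)^2$, where $\kappa_j$ are the eigenvalues of $G$.
\item Every Gram eigenvalue of order $\sigma^2$ contributes a term of order one. By interlacing and the bounded trace increments, for any design that fills $Q$ the eigenvalues keep passing through that level as $N$ grows. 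This quantity therefore does not tend to zero, and no design assumption repairs it.
\item Concretely, the sample-error term of the Smale--Zhou decomposition you plan to use is $O(1/(\lambda_N\sqrt N))=O(\sqrt N/\sigma^2)$ at $\lambda_N=\sigma^2/N$.
\end{itemize}

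To close the argument, bound the supremum directly instead.

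\begin{itemize}
\item The noise-free part is harmless. It has $\mathcal H_{\mathrm{NH}}$-norm at most $\|f^\star\|_{\mathcal H_{\mathrm{NH}}}$ and residual sum of squares at most $\sigma^2\|f^\star\|^2_{\mathcal H_{\mathrm{NH}}}$ at the design points. Equicontinuity of RKHS balls plus your design assumption then gives uniform convergence to $f^\star$.
\item For the noise part, its pointwise covariance is dominated in Loewner order by the GP posterior covariance, which tends to zero uniformly when the design fills $Q$. Upgrade this to the supremum by chaining in the canonical metric. That metric is dominated by the $N$-independent metric induced by $K_{\mathrm{NH}}$, which requires a mild H\"older-type regularity of $k$ rather than universality.
\item Alternatively, let the noise hyperparameter grow: $\sigma_N^2\to\infty$ with $\sigma_N^2/N\to0$ and $\sigma_N^2/\sqrt N\to\infty$. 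Your RKHS-norm route then works, but for a different estimator than the one in the statement.
\end{itemize}

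The paper's proof does not supply this step either; it only appeals to ``standard consistency results for kernel-based GP regression.''
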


\begin{proof}
By Proposition 1, the RKHS associated with the nonholonomic kernel is
\[
\mathcal H_{\mathrm{NH}}
=
\left\{
f:Q\to\mathbb R^n
\;\middle|\;
f(q)=P(q)g(q),\ \ g\in\mathcal H_{kI_n}
\right\}.
\]
Since $f^\star\in\mathcal H_{\mathrm{NH}}$, there exists $g^\star\in\mathcal H_{kI_n}$ such that $f^\star(q)=P(q)g^\star(q)$,\,$q\in Q$.

Let $\hat g_N$ denote the Gaussian process estimator associated with the kernel
$K_0(q,q')=k(q,q')I_n$. Under Assumptions 1 and 2 and standard consistency results for kernel-based GP regression, $\hat g_N$ is uniformly consistent in probability for $g^\star$, namely, for every $\varepsilon>0$,
\[
\mathbb{P}\!\left(
\sup_{q\in Q}\|\hat g_N(q)-g^\star(q)\|>\varepsilon
\right)\to 0
\quad\text{as }N\to\infty.
\]

On the other hand, the estimator induced by the nonholonomic kernel satisfies
$\hat f_N(q)=P(q)\hat g_N(q)$. Therefore, for every $q\in Q$, $\hat f_N(q)-f^\star(q)
=
P(q)\bigl(\hat g_N(q)-g^\star(q)\bigr)$,
and hence $\|\hat f_N(q)-f^\star(q)\|
\le
\|P(q)\|\,\|\hat g_N(q)-g^\star(q)\|$.

By construction, $P(q)=I-A(q)^\dagger A(q)$ is the orthogonal projector onto  $\mathcal D_q=\ker A(q)$. In particular, $P(q)$ is self-adjoint and idempotent, that is, $P(q)^\top=P(q)$ and $P(q)^2=P(q)$. Hence, with respect to the Euclidean norm, its operator norm satisfies $\|P(q)\|\le 1$, $q\in Q$. 

Indeed, if $v\in\mathbb R^n$ is decomposed orthogonally as $v=v_{\mathcal D}+v_{\mathcal D^\perp}$,
$v_{\mathcal D}\in\mathcal D_q,\, v_{\mathcal D^\perp}\in\mathcal D_q^\perp$,
then $P(q)v=v_{\mathcal D}$, and therefore
$\|P(q)v\|=\|v_{\mathcal D}\|\le \|v\|.$ Consequently,
\[
\sup_{q\in Q}\|\hat f_N(q)-f^\star(q)\|
\le
\sup_{q\in Q}\|\hat g_N(q)-g^\star(q)\|.
\]

Now let $\varepsilon>0$. Then
\[
\mathbb{P}\!\left(
\sup_{q\in Q}\|\hat f_N(q)\!-\!f^\star(q)\|\!>\!\varepsilon\!
\right)
\!\le\!
\mathbb{P}\!\left(
\sup_{q\in Q}\|\hat g_N(q)\!-\!g^\star(q)\|\!>\!\varepsilon\!
\right)\!.
\]
Since the right-hand side converges to zero as $N\to\infty$, we conclude that, for every $\varepsilon>0$,
\[
\mathbb{P}\!\left(
\sup_{q\in Q}\|\hat f_N(q)-f^\star(q)\|\>\varepsilon
\right)\to 0
\quad\text{as }N\to\infty.
\] \end{proof}

This estimate shows that consistency of the unconstrained estimator $\hat g_N$ transfers directly to the projected estimator $\hat f_N$, provided the target dynamics admit the representation $f^\star(q)=P(q)g^\star(q)$. Hence, the nonholonomic construction preserves the approximation properties of the underlying regression method while enforcing the constraints exactly.

\section{Numerical example: Vertical rolling disk}
\label{sec:simulations}

To illustrate the proposed nonholonomic Gaussian process framework, we consider the vertical rolling disk, see \cite{bloch2003nonholonomic}. The configuration space is $Q = SE(2)\times S^1 \cong \mathbb{R}^2\times S^1\times S^1$ with coordinates $q=(x,y,\varphi,\theta)$, where $(x,y)$ denotes the contact point on the plane, $\varphi$ is the orientation of the disk in the plane, and $\theta$ is the rotation angle of the disk. 

The rolling-without-slipping constraints are
\begin{equation}
\dot x - R\cos\varphi\,\dot\theta = 0,
\qquad
\dot y - R\sin\varphi\,\dot\theta = 0.
\label{eq:vrd_constraints}
\end{equation}
These define the constraint distribution with constraint matrix
\begin{equation}
A(q)=
\begin{bmatrix}
1 & 0 & 0 & -R\cos\varphi\\
0 & 1 & 0 & -R\sin\varphi
\end{bmatrix}.
\label{eq:vrd_A}
\end{equation}

A convenient basis of the distribution is given by the vector fields $\displaystyle{X_1(q)=
\begin{bmatrix}
R\cos\varphi\,\, 
R\sin\varphi\,\,
0\,\,
1
\end{bmatrix}^{T}}$ and $X_2(q)=
\begin{bmatrix}
0\,\,
0\,\,
1\,\,
0
\end{bmatrix}^{T},$ so that every admissible velocity can be written as
$\dot q = B(q)\nu,
\,
B(q)=\begin{bmatrix} X_1(q) & X_2(q)\end{bmatrix},
\,
\nu\in\mathbb{R}^2$. By construction, $A(q)B(q)=0$.

We consider admissible dynamics of the form $\dot q = f(q)$, with $f(q)\in \D_q$. Rather than learning arbitrary vector fields in the ambient tangent bundle, we work with dynamics that are admissible through the representation $f(q)=B(q)\nu(q)$, where $\nu(q)\in\mathbb{R}^2$ is an adapted velocity field to $\D$.

We define a nominal model
$f_{\mathrm{nom}}(q)=B(q)\nu_{\mathrm{nom}}(q)$, and the perturbed dynamics
\[
f^\star(q)=B(q)\nu^\star(q),\quad
\nu^\star(q)=\nu_{\mathrm{nom}}(q)+\delta(q),
\]
where $\delta(q)$ is a smooth model disturbance. Thus, the true dynamics is obtained by augmenting the nominal field adapted to $\D$ with an unmodeled component, while admissibility is preserved by construction. Because both vector fields are generated through the same distribution basis $B(q)$, they satisfy $f_{\mathrm{nom}}(q),\, f^\star(q)\in \D_q
\, \text{for all } q\in Q$, so admissibility is preserved by construction.

In the simulation we use the parameters $R=1$, $\Omega=1.0$, $\omega=0.35$, $\varepsilon=0.18$, where $\Omega$ and $\omega$ denote the nominal (baseline) angular rates of the dynamics on adapted coordinates, and $\varepsilon$ sets the amplitude of the perturbation. To avoid an overly degenerate nominal trajectory, the nominal velocity field $\nu_{\mathrm{nom}}$ is chosen to be weakly state-dependent, rather than constant:
\begin{equation*}
\nu_{\mathrm{nom}}(q)=
\begin{bmatrix}
\Omega + 0.10\sin\varphi + 0.06\cos\theta\\[0.3em]
\omega + 0.08\cos\varphi - 0.05\sin\theta
\end{bmatrix}.
\label{eq:nu_nom}
\end{equation*}
The true velocity field is then defined by adding the smooth perturbation
\begin{equation*}
\delta(q)=
\varepsilon
\begin{bmatrix}
0.60\sin(\varphi-\theta)+0.25\cos(2\theta)\\[0.3em]
0.50\cos(\varphi+\theta)-0.20\sin(2\varphi)
\end{bmatrix},
\label{eq:delta_vrd}
\end{equation*}
so that
$\nu^\star(q)=\nu_{\mathrm{nom}}(q)+\delta(q)$, and therefore $f^\star(q)=B(q)\nu^\star(q)$ satisfies $f^\star(q)\in \D_q$ for all $q\in Q$.

Training data are generated from the true model by sampling states along three true trajectories with initial conditions
$(0,0,0.2,0.1)$, $(0,0,-0.6,0.4)$ and 
$(0,0,0.8,-0.5)$, integrated with time step $\Delta t=0.05$. The sampled states are then subsampled to obtain $N_{\mathrm{train}}=120$ training points.  Gaussian perturbations with standard deviation $\sigma_{\mathrm{state}}=0.05$ are added to the angular coordinates $(\varphi,\theta)$ of the sampled states, and noisy observations are generated as $y_i = f^\star(q_i) + \eta_i$,
$\eta_i \sim \mathcal{N}(0,\sigma^2 I)$,
with $\sigma=0.03$. The horizon time for testing is chosen as $T=25$.

We compare two Gaussian process models. The first is a standard vector-valued Gaussian process with kernel $K_{\mathrm{std}}(q,q') = k(q,q')I_4$,
which does not encode the nonholonomic constraints. The second is the proposed Gaussian process with the nonholonomic kernel
$K_{\mathrm{NH}}(q,q') = P(q)\,k(q,q')\,P(q')$, whose posterior mean predictor is denoted by $\hat f_{\mathrm{NH}}$, where $P(q)=I-A(q)^\dagger A(q)$ is the orthogonal projector onto $\D$. In this example, since $R=1$, the basis matrix is $\displaystyle{
B(q)=
\big[(\cos\varphi,\sin\varphi,0,1)^T,\ (0,0,1,0)^T\big]}$ and the projector can be written explicitly as $P(q)=B(q)\big(B(q)^\top B(q)\big)^{-1}B(q)^\top$, that is, 
\begin{equation}
\small
P(q)=
\begin{bmatrix}
\frac{\cos^2\varphi}{2} & \frac{\sin\varphi\cos\varphi}{2} & 0 & \frac{\cos\varphi}{2}\\
\frac{\sin\varphi\cos\varphi}{2} & \frac{\sin^2\varphi}{2} & 0 & \frac{\sin\varphi}{2}\\
0 & 0 & 1 & 0\\
\frac{\cos\varphi}{2} & \frac{\sin\varphi}{2} & 0 & \frac{1}{2}
\end{bmatrix}.\normalsize
\end{equation}

By the results established in Section~III, the nonholonomic kernel restricts the entire GP prior to admissible vector fields and therefore guarantees $A(q)\hat f_{\mathrm{NH}}(q)=0, \text{for all } q\in Q$.

In all experiments we use a squared-exponential scalar kernel acting on the angular variables $(\varphi,\theta)$,
\begin{equation}
k(q,q')=\sigma_f^2
\exp\!\left(
-\frac12
\begin{bmatrix}
\varphi-\varphi'\\
\theta-\theta'
\end{bmatrix}^{\!\top}
\Lambda^{-1}
\begin{bmatrix}
\varphi-\varphi'\\
\theta-\theta'
\end{bmatrix}
\right),
\label{eq:scalar_kernel_num}
\end{equation}
with $\Lambda=\mathrm{diag}(\ell_\varphi^2,\ell_\theta^2)
$.  Here $\sigma_f^2$ denotes the signal variance, $(\ell_\varphi,\ell_\theta)$ the length scales, and $\sigma_n^2$ the noise variance. The standard GP is trained directly in ambient coordinates $\mathbb{R}^4$, whereas the nonholonomic GP is trained in the adapted coordinates $\nu\in\mathbb{R}^2$ and mapped back to ambient space through $B(q)$.  The kernel hyperparameters obtained by marginal likelihood optimization are summarized in Table~\ref{tab:vrd_hyperparameters}.
\begin{table}[h]
\centering
\begin{tabularx}{\textwidth}{lcccc}
\hline
Model / channel & $\sigma_f^2$ & $\ell_\varphi$ & $\ell_\theta$ & $\sigma_n^2$ \\
\hline
Stand. GP, outp 1 & $0.949^2$ & 1.50 & 2.18 & 0.00198 \\
Stand. GP, output 2 & $1.49^2$  & 1.50 & 2.46 & 0.00409 \\
Stand. GP, output 3 & $1.21^2$  & 1.27 & 1.70 & 0.0939 \\
Stand. GP, output 4 & $0.989^2$ & 1.39 & 1.28 & 0.0583 \\
NH GP, adapted 1      & $1.02^2$  & 1.47 & 1.35 & 0.0306 \\
NH GP, adapted 2      & $1.21^2$  & 1.27 & 1.70 & 0.0939 \\
\hline
\end{tabularx}
\caption{Standard vs. nonholonomic GP\label{tab:vrd_hyperparameters}}
\end{table}
In this example the squared-exponential kernel is implemented in the angular chart $(\varphi,\theta)$. Since the trajectories considered here remain in a regime where this chart representation is numerically adequate, we use the corresponding Euclidean RBF kernel. A more intrinsic treatment on $S^1\times S^1$ could alternatively employ periodic kernels, see \cite{rasmussen2006gaussian}.

To assess model quality, we compare both local vector-field prediction and trajectory-level performance. First, we evaluate the pointwise prediction error
$e_f(q)=\|\hat f(q)-f^\star(q)\|$. Second, we measure violation of the nonholonomic constraints through $e_{\mathrm{nh}}(q)=\|A(q)\hat f(q)\|$. For the nonholonomic GP, this quantity vanishes identically up to numerical precision, whereas for the unconstrained GP it is generally nonzero. Third, we integrate the learned vector fields and compare the resulting trajectories with the true trajectory. If $q^\star(t)$ denotes the trajectory generated by $f^\star$ and $\hat q(t)$ the trajectory generated by a learned model, we use the planar tracking error as a physical observable:
\begin{equation}
\Delta(t)=\sqrt{(x^\star(t)-\hat x(t))^2+(y^\star(t)-\hat y(t))^2}.
\label{eq:trajectory_error_num}
\end{equation}

Fig.~\ref{fig:vrd_trajectory_comparison} displays the resulting trajectories. The nominal model exhibits a visible drift with respect to the true trajectory. The unconstrained GP already improves the prediction substantially, but the proposed nonholonomic GP yields the closest trajectory to the ground truth over the full horizon time. This is confirmed quantitatively in Fig.~\ref{fig:vrd_trajectory_error}, which reports the planar trajectory error $\Delta(t)$. The nonholonomic GP achieves the smallest error over essentially the entire simulation interval and improves upon both the nominal model and the unconstrained GP.

\begin{figure}[h!]
    \centering
    \includegraphics[width=.99\linewidth]{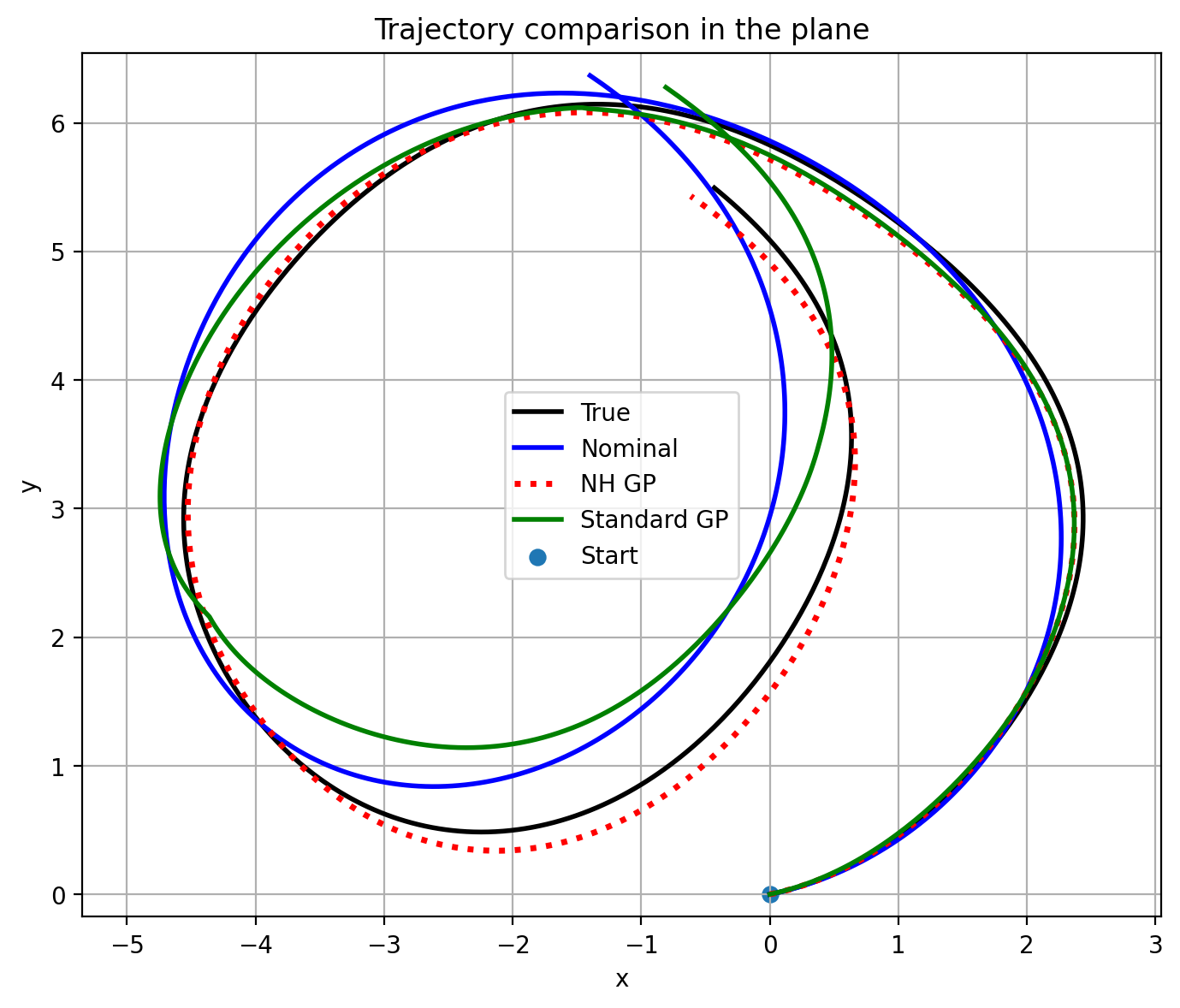}
    \caption{Trajectory comparison in the  plane: true dynamics, nominal model, nonholonomic GP, and standard GP.}
    \label{fig:vrd_trajectory_comparison}
\end{figure}

\begin{figure}[h!]
    \centering
    \includegraphics[width=.99\linewidth]{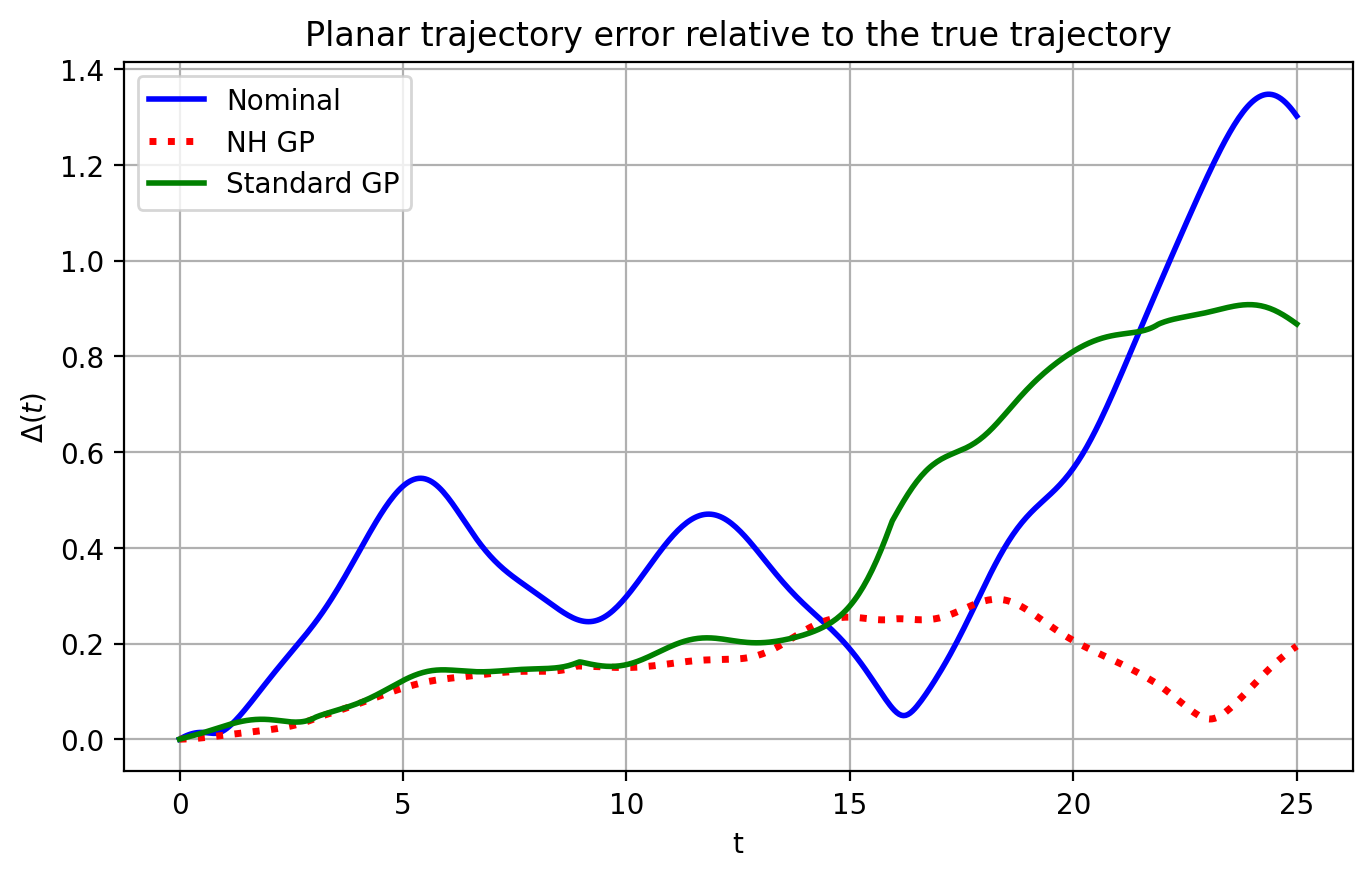}
    \caption{Planar trajectory error $\Delta(t)$ for the nominal, nonholonomic GP, and standard GP models.}
    \label{fig:vrd_trajectory_error}
\end{figure}

Fig.~\ref{fig:vrd_constraint_violation} shows the constraint violation metric $e_{\mathrm{nh}}(q)=\|A(q)\hat f(q)\|$ over the test set. As predicted by the theory, the nonholonomic GP preserves the rolling constraints exactly up to numerical precision, while the unconstrained GP produces nonzero components outside the admissible distribution. Finally, Fig.~\ref{fig:vrd_field_error} reports the pointwise field prediction error $e_f(q)=\|\hat f(q)-f^\star(q)\|$ on the test set. The nonholonomic GP also provides the smallest field error, showing that its geometric consistency does not come at the expense of predictive accuracy. On the contrary, in this example the nonholonomic prior improves both local field approximation and long-horizon trajectory prediction.

\begin{figure}[h!]
    \centering
    \includegraphics[width=.99\linewidth]{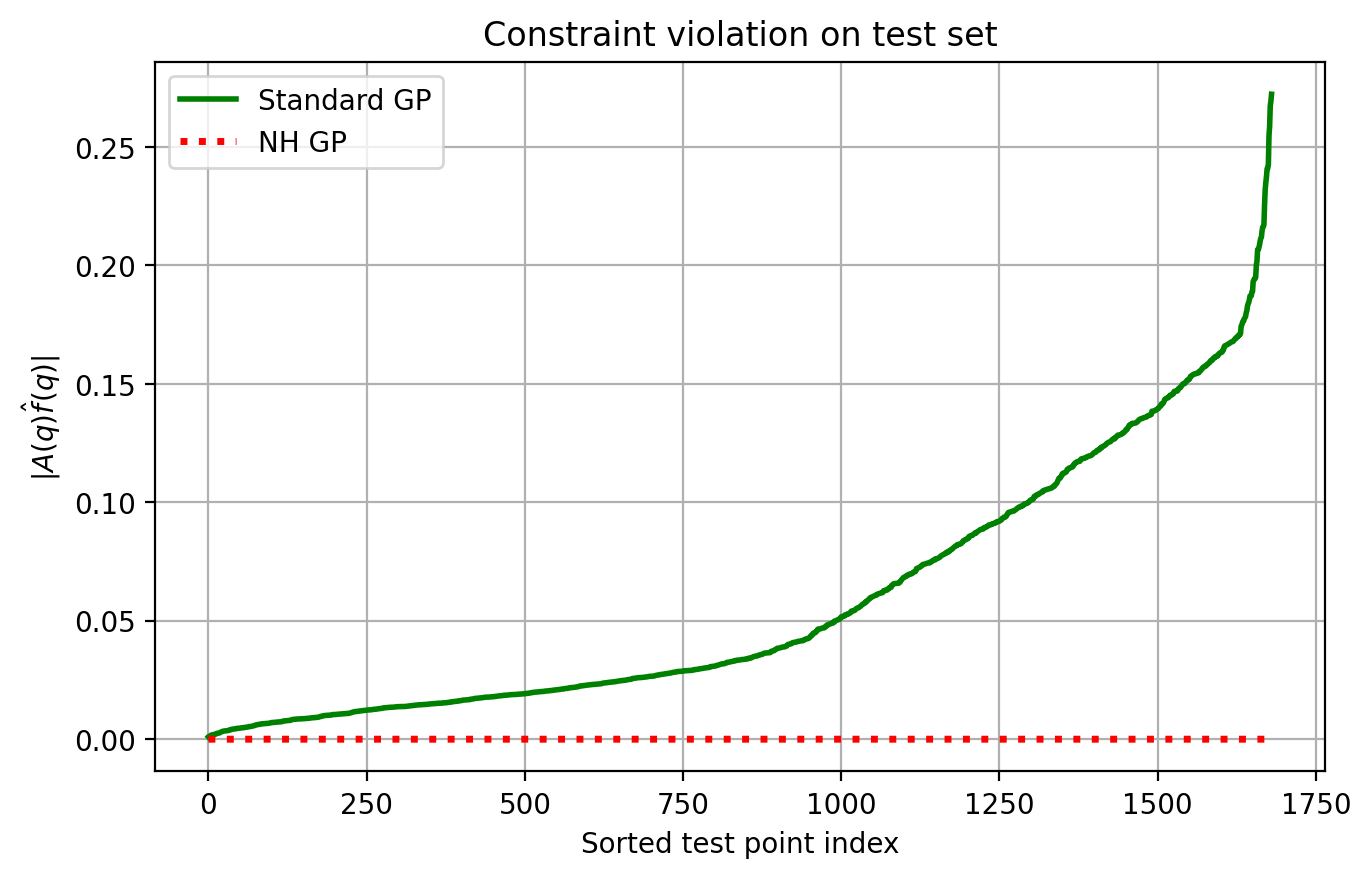}
    \caption{Constraint violation metric $e_{\mathrm{nh}}(q)=\|A(q)\hat f(q)\|$ for the standard and nonholonomic GP models.}
    \label{fig:vrd_constraint_violation}
\end{figure}

\begin{figure}[h!]
    \centering
    \includegraphics[width=.99\linewidth]{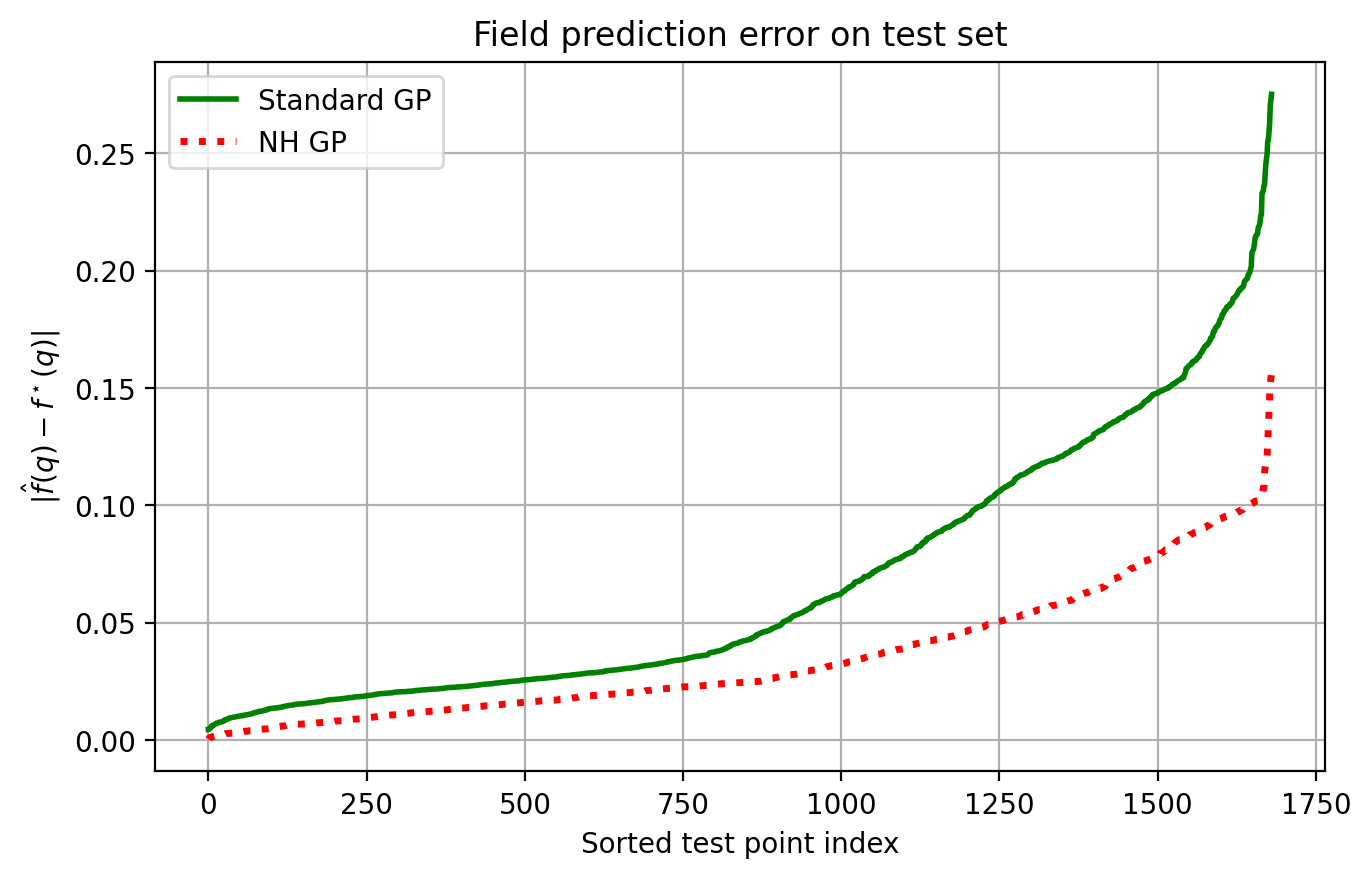}
    \caption{Pointwise field prediction error $e_f(q)=\|\hat f(q)-f^\star(q)\|$ on the test set.}
    \label{fig:vrd_field_error}
\end{figure}

Table~\ref{tab:vrd_metrics} shows the numerical results. The nonholonomic GP achieves the smallest mean field error, exactly preserves the constraint distribution, and yields the lowest mean and final trajectory errors among all models considered.

\begin{table}[h!]
\centering
\caption{Metrics used for the vertical rolling disk.}
\label{tab:vrd_metrics}
\setlength{\tabcolsep}{4pt}
\renewcommand{\arraystretch}{0.95}
\small
\begin{tabular}{lccc}
\hline
Metric & Nominal & Standard GP & NH GP \\
\hline
Mean field err.   & ---      & 0.067409                 & 0.035205 \\
Mean constr. viol.& ---      & $5.873109\times 10^{-2}$ & 0 \\
Max constr. viol. & ---      & $2.722539\times 10^{-1}$ & 0 \\
Mean planar err.  & 0.449343 & 0.373659                 & 0.148855 \\
Final planar err. & 1.302615 & 0.868002                 & 0.194430 \\
\hline
\end{tabular}
\end{table}

\section{Conclusions and Future Work}\label{sec:conclusions}

We introduced a nonholonomic kernel for GP regression that enforces the nonholonomic constraints exactly by projecting a scalar kernel onto the constraint distribution. The resulting construction yields valid vector-valued kernels, restricts the associated RKHS to admissible vector fields, and preserves consistency of the underlying regression scheme.

Future work will address learning on reduced nonholonomic systems, where the reconstruction process can make the estimation problem substantially more challenging. We also plan to incorporate volume variation as an additional metric in examples such as the Chaplygin sleigh and the Suslov problem, where volume is not necessarily preserved and may play an important role in the learning behavior.
\bibliographystyle{IEEEtran}
\bibliography{root}

@inproceedings{Greydanus2019,
  title={Hamiltonian Neural Networks},
  author={Greydanus, Sam and Dzamba, Misko and Yosinski, Jason},
  booktitle={Advances in Neural Information Processing Systems},
  volume={32},
  year={2019}
}

@article{Jin2020,
  title={SympNets: Intrinsic structure-preserving symplectic networks for identifying Hamiltonian systems},
  author={Jin, Pengzhan and Zhang, Zhen and Zhu, Aiqing and Tang, Yanzhi and Karniadakis, George Em},
  journal={Neural Networks},
  volume={132},
  pages={166--179},
  year={2020},
  publisher={Elsevier}
}

@inproceedings{Zhong2020,
  title={Symplectic ODE-Net: Learning Hamiltonian Dynamics with Control},
  author={Zhong, Yaofeng Desmond and Dey, Biswadip and Chakraborty, Amit},
  booktitle={International Conference on Learning Representations},
  year={2020}
}

@article{sriperumbudur2011universality,
  title={Universality, characteristic kernels and RKHS embedding of measures},
  author={Sriperumbudur, Bharath K. and Fukumizu, Kenji and Gretton, Arthur and Sch{\"o}lkopf, Bernhard and Lanckriet, Gert R. G.},
  journal={Journal of Machine Learning Research},
  volume={12},
  pages={2389--2410},
  year={2011}
}

@inproceedings{Cranmer2020,
  title={Lagrangian Neural Networks},
  author={Cranmer, Miles and Greydanus, Sam and Hoyer, Stephan and Battaglia, Peter and Spergel, David and Ho, Shirley},
  booktitle={ICLR 2020 Workshop on Integration of Deep Neural Models and Differential Equations}
}

@inproceedings{Finzi2020,
  title={Generalizing Convolutional Neural Networks for Equivariance to Lie Groups on Arbitrary Continuous Data},
  author={Finzi, Marc and Stanton, Samuel and Izmailov, Pavel and Wilson, Andrew Gordon},
  booktitle={Proceedings of the 37th International Conference on Machine Learning},
  series={Proceedings of Machine Learning Research},
  volume={119},
  pages={3165--3176},
  year={2020},
  publisher={PMLR}
}

@article{Vaquero2024,
  title={Symmetry Preservation in Hamiltonian Systems: Simulation and Learning},
  author={Vaquero, Miguel and Cort{\'e}s, Jorge and de Diego, David Mart{\'\i}n},
  journal={Journal of Nonlinear Science},
  volume={34},
  number={6},
  pages={115},
  year={2024},
  publisher={Springer}
}

@inproceedings{Borovitskiy2020,
  author    = {Viacheslav Borovitskiy and Alexander Terenin and Peter Mostowsky and Marc Peter Deisenroth},
  title     = {Mat{\'e}rn Gaussian Processes on Riemannian Manifolds},
  booktitle = {Advances in Neural Information Processing Systems},
  volume    = {33},
  pages     = {12426--12437},
  year      = {2020}
}

@article{Jayasumana2015,
  author    = {Sadeep Jayasumana and Richard Hartley and Mathieu Salzmann and Hongdong Li and Mehrtash Harandi},
  title     = {Kernel Methods on Riemannian Manifolds with Gaussian {RBF} Kernels},
  journal   = {IEEE Transactions on Pattern Analysis and Machine Intelligence},
  volume    = {37},
  number    = {12},
  pages     = {2464--2477},
  year      = {2015},
  doi       = {10.1109/TPAMI.2015.2414422}
}

@article{diaz2025lagrangian,
  title={Lagrangian neural networks for nonholonomic mechanics},
  author={D{\'\i}az, Viviana Alejandra and Salomone, Leandro Mart{\'\i}n and Zuccalli, Marcela},
  journal={Chaos, Solitons \& Fractals},
  volume={199},
  pages={116867},
  year={2025},
  publisher={Elsevier}
}

@article{Eldred2024,
  title={Lie--Poisson Neural Networks (LPNets): Data-based computing of Hamiltonian systems with symmetries},
  author={Eldred, Christopher and Gay-Balmaz, Fran{\c{c}}ois and Huraka, Sofiia and Putkaradze, Vakhtang},
  journal={Neural Networks},
  volume={173},
  pages={106162},
  year={2024},
  publisher={Elsevier}
}

@article{Wang2024,
  title={Learning Nonholonomic Dynamics with Constraint Discovery},
  author={Wang, Baiyue and Bloch, Anthony},
  journal={arXiv preprint arXiv:2410.15201},
  year={2024}
}

@book{ne_mark2004dynamics,
  title={Dynamics of nonholonomic systems},
  author={Neimark, Juru Isaakovich and Fufaev, Nikola Alekseevich},
  volume={33},
  year={2004},
  publisher={American Mathematical Soc.}
}

@article{carmeli2010vector,
  title={Vector valued reproducing kernel Hilbert spaces and universality},
  author={Carmeli, Claudio and De Vito, Ernesto and Toigo, Alessandro and Umanit{\'a}, Veronica},
  journal={Analysis and Applications},
  volume={8},
  number={01},
  pages={19--61},
  year={2010},
  publisher={World Scientific}
}

@article{micchelli2005learning,
  title={On learning vector-valued functions},
  author={Micchelli, Charles A and Pontil, Massimiliano},
  journal={Neural computation},
  volume={17},
  number={1},
  pages={177--204},
  year={2005},
  publisher={MIT Press}
}

@incollection{bloch2003nonholonomic,
  title={Nonholonomic mechanics},
  author={Bloch, Anthony M},
  booktitle={Nonholonomic mechanics and control},
  pages={},
  year={2003},
  publisher={Springer}
}

@book{monforte2002geometric,
  title={Geometric, control and numerical aspects of nonholonomic systems},
  author={ Cort{\'e}s, Jorge},
  number={1793},
  year={2002},
  publisher={Springer Science \& Business Media}
}

@inproceedings{beckers2022gaussian,
  title={Gaussian process port-Hamiltonian systems: Bayesian learning with physics prior},
  author={Beckers, Thomas and Seidman, Jacob and Perdikaris, Paris and Pappas, George J},
  booktitle={2022 IEEE 61st Conference on Decision and Control (CDC)},
  pages={1447--1453},
  year={2022},
  organization={IEEE}
}

@book{rasmussen2006gaussian,
  title={{Gaussian} processes for machine learning},
  author={Rasmussen, Carl Edward and Williams, Christopher KI},
  volume={1},
  year={2006},
  publisher={MIT press Cambridge}
}

@Book{steinwart2008support,
  title     = {Support vector machines},
  publisher = {Springer Science \& Business Media},
  year      = {2008},
  author    = {Steinwart, Ingo and Christmann, Andreas},
  doi={10.1007/978-0-387-77242-4},
}

\end{document}